\newcolumntype{L}[1]{>{\raggedright\let\newline\\\arraybackslash\hspace{0pt}}m{#1}}
\newcolumntype{C}[1]{>{\centering\let\newline\\\arraybackslash\hspace{0pt}}m{#1}}
\newcolumntype{R}[1]{>{\raggedleft\let\newline\\\arraybackslash\hspace{0pt}}m{#1}}
\newtheorem{theorem}{\textbf{Theorem}}
\newtheorem{lemma}{\textbf{Lemma}}
\newenvironment{proof}[1][Proof]{\begin{trivlist}
\item[\hskip \labelsep {\bfseries #1}]}{\end{trivlist}}
\newcommand{\qed}{\nobreak \ifvmode \relax \else
      \ifdim\lastskip<1.5em \hskip-\lastskip
      \hskip1.5em plus0em minus0.5em \fi \nobreak
      \vrule height0.75em width0.5em depth0.25em\fi}
\begin{document}

\title{A Simple and General Problem and its Optimal Randomized Online Algorithm Design with Competitive Analysis}

\author{Ying ZHANG, IE of CUHK}
\maketitle

\begin{abstract}
The online algorithm design was proposed to handle the caching problem when the future information is unknown \cite{karlin1988competitive}. And currently, it draws more and more attentions from the researchers from the areas of microgrid, where the production of renewables are unpredictable, \cite{lu2013simple},\cite{minghua_sigmetrics}, etc.

In this note, we present a framework of randomized online algorithm design for the \textit{simple and tractable} problem. This framework hopes to provide a tractable design to design a randomized online algorithm, which can be proved to achieve the best competitive ratio by \textit{Yao's Principle} \cite{YaoPrinciple}. 

\end{abstract}

\section{A simple but general problem requiring online solution}
In this note, we consider a simple problem, which needs to be solved in the online manner. Suppose its input can be denoted by the parameter $p\in \mathcal{P}$ and its online algorithm can be denoted by $s\in \mathcal{S}$. For example, in the ski rental problem \cite{karlin1988competitive,karlin1994competitive}, $p$ represents how many times the player goes to ski totally, and $s$ represents how many days the player rents the ski before he buys the ski. In our consideration, $p$ and $s$ can be numbers, vectors or matrixes. \footnote{The problem should be \textbf{simple} enough such that we can characterize its input and its online algorithm by a limited number of parameters.} We use the probability distributions of $p$ and $s$ to denote the randomized input and the randomized online algorithm.

Obviously the \textit{optimal offline} cost is uniquely determined by the input $p$, which we denote as $\text{Cost}_{\text{off}}(p)$, while the \textit{online} cost is jointly determined by the input $p$ and the algorithm $s$, which we denote as $\text{Cost}_{\text{on}}(s,p)$.

The ratio of the online cost and offline cost $R(s,p)=\frac{\text{Cost}_{\text{on}}(s,p)}{\text{Cost}_{\text{off}}}$ evaluates how well the online algorithm $s$ performs on the input $p$: a smaller $R(s,p)$ means better $s$, and $R(s,p)\geq 1$. We assume we can obtain a closed form of $R(s,p)$. \footnote{Again, since the problem is so simple}

\section{A Lower Bound for the Competitive Ratio by \textit{Yao's Principle}}

For a given randomized online algorithm $\textit{A}_r$, we can obtain its competitive ratio by $\text{CR}(\textit{A}_r) = \max_{\textit{input}}\frac{\text{Cost}_{\text{on}}}{\text{Cost}_{\text{off}}}$. To show that this randomized online algorithm is the best in terms of competitive ratio, technically, we need to show that given any other randomized online algorithm, the competitive ratio is larger. This is nontrivial because it is difficult to enumerate all possible randomized online algorithms in the design space, or we can think that it's difficult to enumerate all distributions.


In the following analysis, we denote the randomized online algorithm and the randomized input by two randomized variables $S$ with the distribution $f(s)$ and $P$ with the distribution $g(p)$, which are supported by $\mathcal{S}$ and $\mathcal{P}$ respectively.

For convenience, we define two functions $U_g(s)$ and $V_f(p)$ as follows,
\begin{itemize}
\item[$\bullet$] Given the randomized input $g(p)$, $U_g(s)$ represents the expectation of the ratio when the online algorithm is deterministically $s$, i.e.
    $$U_g(s) = \int \frac{\text{Cost}_{\text{on}}(s,p)}{\text{Cost}_{\text{off}}(p)}g(p)dp .$$
\item[$\bullet$] Given the randomized online algorithm $f(s)$, $V_f(p)$ represents the expectation of the ratio when the input is deterministically $p$, i.e.
    $$V_f(p) = \int \frac{\text{Cost}_{\text{on}}(s,p)}{\text{Cost}_{\text{off}}(p)}f(s)ds .$$
\end{itemize}

\subsection{Yao's Principle}

We have \textit{Yao's Principle} \cite{YaoPrinciple} to obtain a lower bound of the competitive ratio.

\begin{lemma}[Yao's Principle] \label{lemma:Yao}
The competitive ratio of any randomized online algorithm is lower bounded by the ratio of any randomized input and the best deterministic online algorithm, i.e.
$$\max_{g(p)}\min_{s}U_g(s)\leq\min_{f(s)}\max_{p}V_f(p)$$
\end{lemma}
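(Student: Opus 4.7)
The plan is to derive the inequality as an instance of the classical weak-duality (minimax) argument: for any choice of input distribution $g$ and algorithm distribution $f$, the joint expectation of the ratio $R(s,p)$ simultaneously upper-bounds the best deterministic response against $g$ and lower-bounds the worst-case input against $f$. No convexity structure of the problem is needed for the direction in the lemma, which is what makes this a completely general lower-bound tool.

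First, I would take $S\sim f$ and $P\sim g$ to be independent, and introduce the double integral
\begin{equation*}
J(f,g)\;:=\;\iint R(s,p)\, f(s)\, g(p)\, ds\, dp.
\end{equation*}
By Fubini's theorem and the definitions of $U_g$ and $V_f$, the same quantity admits two representations:
\begin{equation*}
J(f,g)\;=\;\int U_g(s)\, f(s)\, ds\;=\;\int V_f(p)\, g(p)\, dp.
\end{equation*}

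Next, I would invoke the elementary averaging bounds: since $f$ is a probability density, $\int U_g(s)\, f(s)\, ds \geq \min_{s} U_g(s)$; and since $g$ is a probability density, $\int V_f(p)\, g(p)\, dp \leq \max_{p} V_f(p)$. Chaining these through $J(f,g)$ yields
\begin{equation*}
\min_{s} U_g(s)\;\leq\; J(f,g)\;\leq\;\max_{p} V_f(p),
\end{equation*}
which is valid for \emph{every} pair $(f,g)$. The left-hand side depends only on $g$ and the right-hand side only on $f$, so I can take $\max$ over $g$ on the left and $\min$ over $f$ on the right to conclude the claim.

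The main ``obstacle'' is not a mathematical one but a bookkeeping one: I must be careful to treat $S$ and $P$ as independent in the construction of $J(f,g)$, so that Fubini genuinely gives the two separated representations via $U_g$ and $V_f$. A secondary point worth flagging is the attainment of the extrema: strictly speaking, if $\mathcal{S}$ or $\mathcal{P}$ is not compact, the $\min$ and $\max$ should be read as $\inf$ and $\sup$, but under the mild regularity assumed in Section~I (closed-form $R(s,p)$ on tractable domains) the extrema are attained and the statement holds as written.
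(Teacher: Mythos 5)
Your proposal is correct and follows essentially the same route as the paper: both hinge on the bilinear quantity $\iint R(s,p)f(s)g(p)\,ds\,dp$, Fubini's theorem to obtain the two representations via $U_g$ and $V_f$, and the averaging bounds over probability densities. The only cosmetic difference is that you chain the sandwich $\min_s U_g(s)\leq J(f,g)\leq \max_p V_f(p)$ directly and then decouple the two sides, whereas the paper packages the same facts as an application of the abstract min--max inequality together with the identities $\min_{f}H(f,g)=\min_s U_g(s)$ and $\max_{g}H(f,g)=\max_p V_f(p)$; your $\inf$/$\sup$ caveat for non-compact $\mathcal{S},\mathcal{P}$ is a sensible addition the paper omits.
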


Imagine that we can design an online algorithm with the competitive ratio $\mathbf{R}$, which means that $\mathbf{R}$ is an upper bound for $\mathbf{CR}$ \textbf{and} we can also find a random input, the best deterministic online algorithm for which is also $\mathbf{R}$, which means that $\mathbf{R}$ is a lower bound for \textbf{CR}, we can say that our randomized online algorithm can achieve the smallest competitive ratio,thus optimal in terms of \textbf{CR}.

\subsection{By \textit{min max} inequality}

In fact, the \textit{Yao's Principle} can be viewed as a special case of the more general \textit{min max} inequality \cite{boyd2004convex}\footnote{This inequality is so general that $h(x,y)$ can be any real-valued function and that there is no requirement for the function $h$ (say, whether convex or continuous) and the feasible regions of $x$ and $y$ (say, whether convex or compact).},
$$\max_{y}\min_{x} h(x,y)\leq \min_{x}\max_{y}h(x,y).$$

Please be noted that the equality does not always hold. If $\max_{y}\min_{x} h(x,y)= \min_{x}\max_{y}h(x,y)$, we say that $h(x,y)$ and the feasible regions of $x,y$ satisfy the strong max-min property(or the saddle-point property).

Here we define a function $H(f,g) = \int R(s,p)f(s)g(p)dpds$, where $R(s,p) = \frac{\text{Cost}_{\text{on}}(s,p)}{\text{Cost}_{\text{off}}(p)}$, and the variables $f,g$ are the distributions we define in the previous part. We assume the function $R(s,p)f(s)g(p)$ satisfies the condition of \textit{Fubini} Theorem, meaning we can compute $H(f,g)$ by iterated integrals and we can change the order of the integration. \footnote{This requirement is thought to be general} As a result, we can have
$$H(f,g) = \int U_g(s)f(s) ds = \int V_f(p)g(p) dp.$$

By \textit{min max} inequality, we can have $\max_{g(p)}\min_{f(s)}H(f,g)\leq \min_{f(s)}\max_{g(p)}H(f,g)$. Furthermore, note that

\begin{equation*}
\begin{cases}
\min_{f(s)}H(f,g) = \min_{s}U_g(s),\quad \forall g(p)\\
\max_{g(p)}H(f,g) = \max_{p}V_f(p),\quad \forall f(s)
\end{cases}
\end{equation*}

Then we can establish the inequality in \textit{Yao's Principle}.

\textbf{Remark:} note that, by \textit{Yao's Principle}, we can easily have a lower bound once we choose a randomized input, but we don't how tight the lower bound is. It seems we need to randomly pick a randomized input to obtain a lower bound and randomly pick a randomized algorithm to obtain an upper bound, and we are happy only when we are lucky to make them equal to each other. But this 'trial and error' is not good for at least two reasons,
\begin{itemize}
\item We don't know whether the desired randomized input and algorithm exist or not. Maybe the randomized input and algorithm actually don't exist(the equality in \textit{Yao's Principle} never happens for the specific problem we study), then we spend our whole life on trial and error, unhappily. \textit{Question One:} under what condition is the \textit{Yao's Principle} powerful enough to verify the optimality of the randomized online algorithm?
\item If we just randomly pick the randomized input and algorithms, we need to wait for quite a long time to be happy since the design space is so large. In other words, \textit{Yao's Principle} does not provide a guideline to find the optimal distributions. \textit{Question Two:} given that the equality in \textit{Yao's Principle} holds, how can we find the optimally randomized online algorithm $f^*(s)$ and randomized input $g^*(p)$.
\end{itemize}

The remaining part of this note focus on tackling the above two problems. We firstly give a guideline for searching the randomized input and algorithm with the assumption that they do exist; and then we study the existence problem. 

\subsection{A Sufficient and Necessary Condition}

In this part we try to obtain a sufficient and necessary condition for the best randomized online algorithm, \textbf{under the condition} that there does exist such randomized online algorithm whose optimality can be justified by \textit{Yao's Principle}.

\subsubsection{Two lemmas}

\textbf{Sufficient Condition:}

\begin{lemma}\label{lemma:S_condition}
Suppose there exist a randomized online algorithm $\tilde{f}(s)$ and a randomized input $\tilde{g}(p)$, such that $V_{\tilde{f}}(p) = C_2$ and $U_{\tilde{g}}(s) = C_1$, where $C_1$ and $C_2$ are constants, we can have $C_1 = C_2$. As a result, $\tilde{f}(s)$ is the best randomized online algorithm.
\end{lemma}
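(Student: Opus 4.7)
The plan is to first establish $C_1 = C_2$ by evaluating $H(\tilde{f},\tilde{g})$ via the two iterated orders of integration permitted by the Fubini hypothesis already made, and then invoke Yao's Principle (Lemma~\ref{lemma:Yao}) to conclude that $\tilde{f}$ minimizes the competitive ratio.

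First I would compute $H(\tilde{f}, \tilde{g})$ in both orders. Using the identity $H(f,g) = \int U_g(s)\,f(s)\,ds$ derived earlier, together with the hypothesis $U_{\tilde{g}}(s) \equiv C_1$, we get
$$H(\tilde{f}, \tilde{g}) = \int C_1\,\tilde{f}(s)\,ds = C_1,$$
since $\tilde{f}$ integrates to one. Symmetrically, the representation $H(f,g) = \int V_f(p)\,g(p)\,dp$ combined with $V_{\tilde{f}}(p) \equiv C_2$ yields $H(\tilde{f}, \tilde{g}) = C_2$. Equating the two evaluations gives $C_1 = C_2$, which is the first claim.

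For the optimality claim, I would apply Yao's Principle using the specific randomized input $\tilde{g}$: for any randomized online algorithm $f$,
$$\max_p V_f(p) \;\geq\; \max_{g} \min_s U_g(s) \;\geq\; \min_s U_{\tilde{g}}(s) = C_1,$$
where the second inequality specializes the outer maximum to $g=\tilde{g}$, and the last equality uses that $U_{\tilde{g}}(s)$ is the constant $C_1$. Hence every randomized online algorithm has competitive ratio at least $C_1$. Meanwhile, the competitive ratio of $\tilde{f}$ itself is exactly $\max_p V_{\tilde{f}}(p) = C_2 = C_1$, so $\tilde{f}$ attains the lower bound and is therefore optimal.

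I do not anticipate a serious obstacle: the argument reduces to a Fubini swap followed by a one-line invocation of a lemma we are entitled to use. The only subtle point worth noting is that the displayed chain of inequalities collapses to a chain of equalities, which as a byproduct tells us that $\tilde{g}$ attains $\max_g \min_s U_g(s)$ and that the strong max-min property holds for this problem; that observation should be recorded since it will feed directly into the companion necessary-condition lemma.
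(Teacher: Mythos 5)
Your argument for $C_1 = C_2$ is exactly the paper's: evaluate $\mathcal{R}=\int\int R(s,p)\tilde{f}(s)\tilde{g}(p)\,dp\,ds$ in both orders of integration and equate the two results. You go slightly further than the paper's proof, which stops at $C_1=C_2$ and leaves the optimality conclusion implicit, whereas you explicitly close the loop via Yao's Principle ($\max_p V_f(p)\geq \min_s U_{\tilde{g}}(s)=C_1=C_2=\max_p V_{\tilde{f}}(p)$); this added step is correct and arguably makes the lemma's second assertion actually proved rather than merely asserted.
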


\begin{proof}
Let's consider the value $$\mathcal{R} = \int_s\int_pR(s,p)f(s)g(p)dpds.$$ If we calculate $\mathcal{R}$ by firstly doing integral on $P$, we can have
\begin{align*}
\mathcal{R} &= \int_sU_{\tilde{g}}(s)\tilde{f}(s)ds\\
&= C_1 ;
\end{align*}
otherwise, we will have
\begin{align*}
\mathcal{R} &= \int_pV_{\tilde{f}}(p)\tilde{g}(p)ds\\
&= C_2.
\end{align*}

Then $C_1 = C_2$ and the proof is complete.

\end{proof}

\textbf{Necessary Condition:}

\begin{lemma}\label{lemma:N_condition}
Suppose there exist a randomized algorithm $f^*(s)$ and a randomized input $g^*(p)$, such that
$$\min_{s}U_{g^*}(s)=\max_{p}V_{f^*}(p),$$
which means that the optimality of $f^*(s)$ can be justified by Lemma~\ref{lemma:Yao}, then we can have $U_{g^*}(s_1) = U_{g^*}(s_2)$ for any $s_1,s_2\in \{s|f^*(s)>0\}$ and $V_{f^*}(p_1) = V_{f^*}(p_2)$ for any $p_1,p_2\in \{p|g^*(p)>0\}$.
\end{lemma}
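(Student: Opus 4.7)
The plan is to sandwich the expected ratio $\mathcal{R} := H(f^*, g^*)$ between the two sides of the assumed equality, and then invoke the standard observation that a nonnegative integrand paired with a nonnegative density whose integral vanishes must itself vanish on the support.

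First I would set $L := \min_s U_{g^*}(s)$ and $M := \max_p V_{f^*}(p)$, so the hypothesis reads $L = M$. Using the Fubini-style identity $H(f^*, g^*) = \int U_{g^*}(s) f^*(s)\,ds = \int V_{f^*}(p) g^*(p)\,dp$ already established in the note, together with the pointwise bounds $U_{g^*}(s) \geq L$ and $V_{f^*}(p) \leq M$ and the fact that $f^*, g^*$ are probability densities, I would obtain
$$L \;=\; L\int f^*(s)\,ds \;\leq\; \mathcal{R} \;\leq\; M\int g^*(p)\,dp \;=\; M.$$
Since $L = M$ by hypothesis, all three quantities coincide, which pins down $\mathcal{R} = L = M$.

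Now I would rewrite the left-hand equality as
$$\int \bigl(U_{g^*}(s) - L\bigr)\, f^*(s)\,ds \;=\; 0,$$
where the integrand is a product of two nonnegative factors. Therefore $U_{g^*}(s) = L$ whenever $f^*(s) > 0$, which immediately gives $U_{g^*}(s_1) = U_{g^*}(s_2) = L$ for any $s_1, s_2$ in the support of $f^*$. A completely symmetric argument using $\int (M - V_{f^*}(p))\,g^*(p)\,dp = 0$ produces $V_{f^*}(p_1) = V_{f^*}(p_2) = M$ on the support of $g^*$.

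The one delicate point, and the closest thing to a real obstacle, is measure-theoretic: the conclusion from a vanishing nonnegative integral is a priori only almost-everywhere, whereas the lemma is phrased pointwise on $\{s : f^*(s) > 0\}$. Under the \emph{simple and tractable} standing assumption of the paper, where $R(s,p)$ and hence $U_{g^*}(\cdot)$ and $V_{f^*}(\cdot)$ are continuous in their arguments, this gap closes automatically, so I would simply note the continuity once and proceed. A secondary point worth mentioning in the write-up is that the argument also identifies the common optimal value as $L = M = \mathcal{R}$, which is convenient when one later wants to pair this necessary condition with the sufficient condition of Lemma~\ref{lemma:S_condition}.
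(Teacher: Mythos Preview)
Your argument is correct and essentially identical to the paper's: both sandwich $\mathcal{R}=H(f^*,g^*)$ between $\min_s U_{g^*}(s)$ and $\max_p V_{f^*}(p)$ via Fubini, force equality from the hypothesis, and then read off constancy on the supports (the paper phrases the last step as a contradiction, you phrase it as a vanishing nonnegative integrand, which is the same thing). Your extra remark on the a.e.\ versus pointwise issue is a refinement the paper does not address.
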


\begin{proof}
Let
\begin{align*}
\mathcal{R} &= \int_s\int_pR(s,p)f^*(s)g^*(p)dpds\\
& = \int_sU_{g^*}(s)f^*(s)ds \\
& = \int_pV_{f^*}(p)g^*(p)dp .
\end{align*}
and we can have $\min_{s}U_{g^*}(s)\leq \mathcal{R} \leq \max_{p}V_{f^*}(p)$. Then the following equality automatically holds,
\begin{equation}\label{equ:contradiction}
\min_{s}U_{g^*}(s)= \mathcal{R} = \max_{p}V_{f^*}(p).
\end{equation}

For any $s_1,s_2\in \{s|f^*(s)>0\}$, if $U_{g^*}(s_1) < U_{g^*}(s_2)$, we can have
\begin{align*}
\min_{s}U_{g^*}(s) &< \int_sU_{g^*}(s)f^*(s)ds\\
& = \mathcal{R},
\end{align*}
which is a contradiction with Eq~\ref{equ:contradiction}, then we can have $U_{g^*}(s_1) = U_{g^*}(s_2)$ for any $s_1,s_2\in \{s|f^*(s)>0\}$. The remaining similar result can also be proved in the same way.
\end{proof}

\subsubsection{One guideline}

Once we have Lemma~\ref{lemma:S_condition,lemma:N_condition}, we can immediate come up with the guideline for \textit{Question Two}, as follows,

\begin{itemize}
\item[$\mathbf{f^*(s)}$] \textit{Optimal Randomized algorithm:} Set $V_f(p) = C$, i.e. $\frac{dV_f(p)}{dp} = 0$ and with $\begin{cases}f(s)\geq 0\\\int f(s)ds = 1\end{cases}$, to derive $f^*(s)$. \footnote{The math is relative basic but the calculation can be quite intensive}
\item[$\mathbf{g^*(p)}$] \textit{Optimal Randomized input:} Set $U_g(s) = C$, i.e. $\frac{dU_g(s)}{ds} = 0$ and with $\begin{cases}g(p)\geq 0\\\int g(p)dp = 1\end{cases}$, to derive $g^*(p)$.
\end{itemize}


\textbf{Remark 1:} We remark that the two lemmas can be used to check whether the equality in \textit{Yao's Principle} holds or not.

\textbf{Remark 2:} Actually, with the assumption that the equality in \textit{Yao's Principle} holds, it seems that if we can find a randomized algorithm to achieve a constant ratio for any input, we can say that the algorithm is optimal\footnote{this seems reasonable for the author, but this assertion is so strong that we don't treat it as a lemma currently, to avoid possible confusion}; but it seems equally difficult to verify that `the equality in \textit{Yao's Principle} holds' without checking the previous two lemmas.

\textbf{Remark 3:} The result in this part already gives us enough motivation, in the process of designing a randomized online algorithm, to find $f^*(s)$ to make $V_{f^*}(p)$ being constant for any input $p$, and also $g^*(p)$. However, it does not guarantee that we could find such distributions. Again, note that the analysis in this subsection is made \textit{under the condition} that there does exist such randomized online algorithm whose optimality can be justified by \textit{Yao's Principle} for the given problem. In other words, \textit{Question One} still has no answer.


\section{A Sufficient and Necessary Condition for a Tight Lower Bound}

In this section, we want to explore under which condition the lower bound by \textit{Yao's Principle} is tight. As explained above, the lower bound being tight is equivalent to that the strong min max property holds for the inequality
$$\max_{g(p)}\min_{f(s)}H(f,g)\leq \min_{f(s)}\max_{g(p)}H(f,g).$$

More specifically, let us firstly define a saddle point for the function $H(f,g)$ as
\begin{equation*}
\begin{cases}
f^* = \arg\min_{f(s)}H(f,g^*)\\
g^* = \arg\max_{g(p)}H(f^*,g),
\end{cases}
\end{equation*}
and we further have Lemma~\ref{lemma:saddle_point}.

\begin{lemma}\label{lemma:saddle_point}
The lower bound by \textit{Yao's Principle} is tight if and only if there exists a saddle point $(f^*,g^*)$ for the function $H(f,g)$.
\end{lemma}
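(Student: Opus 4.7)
The plan is to prove the two directions of the equivalence separately, using only the definition of a saddle point together with the weak min-max inequality already invoked in the paper. The whole argument is a short algebraic squeeze; the only delicate point is attainment of the outer optimizers.

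For the sufficient direction, assume a saddle point $(f^*,g^*)$ exists. By the definition given just above, $H(f^*,g^*) = \min_{f}H(f,g^*)$ and $H(f^*,g^*) = \max_{g}H(f^*,g)$. Since $g^*$ is one admissible choice of $g$, the first identity gives $H(f^*,g^*) \leq \max_{g}\min_{f}H(f,g)$; since $f^*$ is one admissible choice of $f$, the second gives $H(f^*,g^*) \geq \min_{f}\max_{g}H(f,g)$. Chaining these with the weak inequality $\max_{g}\min_{f}H(f,g) \leq \min_{f}\max_{g}H(f,g)$ forces equality throughout, which is exactly the tightness of the Yao lower bound.

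For the necessary direction, suppose the equality $\max_{g}\min_{f}H(f,g) = \min_{f}\max_{g}H(f,g) = v$ holds. Assuming the outer optimizers are attained, pick $f^* \in \arg\min_{f}\max_{g}H(f,g)$ and $g^* \in \arg\max_{g}\min_{f}H(f,g)$, so that $\max_{g}H(f^*,g) = v$ and $\min_{f}H(f,g^*) = v$. The key step is to show $H(f^*,g^*) = v$: on one hand $H(f^*,g^*) \leq \max_{g}H(f^*,g) = v$, on the other hand $H(f^*,g^*) \geq \min_{f}H(f,g^*) = v$, so the two bounds pinch. Finally, $\min_{f}H(f,g^*) = v = H(f^*,g^*)$ yields $f^* \in \arg\min_{f}H(f,g^*)$, and symmetrically $g^* \in \arg\max_{g}H(f^*,g)$, verifying the saddle-point property.

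The main obstacle I foresee lies in the necessary direction, namely the attainment of the $\arg\min$ and $\arg\max$. Over an arbitrary space of probability distributions these need not be achieved, in which case the ``only if'' part must be read as asserting existence of a saddle point precisely when the two optimal values coincide \emph{and} are attained. In the regime targeted by this note, where the problem is simple enough that $R(s,p)$ has a closed form and $H(f,g)=\int R(s,p)f(s)g(p)\,ds\,dp$ is bilinear in the distributions and satisfies the Fubini hypothesis, I would close this gap by endowing the distribution spaces with the weak topology and invoking compactness together with continuity of $H$ to guarantee that both outer optima are attained; once that is granted, the algebraic squeeze above closes the proof.
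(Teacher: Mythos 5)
The paper states Lemma~\ref{lemma:saddle_point} and immediately moves on to the question of when a saddle point exists; it never supplies a proof, so there is nothing in the source to compare your argument against. What you have written is the standard equivalence between the strong min-max property and the existence of a saddle point (the same squeeze used, e.g., in the saddle-point interpretation of strong duality in \cite{boyd2004convex}), and both directions are carried out correctly: the chain $\min_f\max_g H \le \max_g H(f^*,g) = H(f^*,g^*) = \min_f H(f,g^*) \le \max_g\min_f H \le \min_f\max_g H$ collapses to equality in the sufficient direction, and the pinching $\min_f H(f,g^*) \le H(f^*,g^*) \le \max_g H(f^*,g)$ does the work in the necessary direction. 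Your proof therefore fills a genuine omission rather than duplicating or diverging from anything in the note.

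The caveat you raise about attainment is the one substantive issue, and you are right to flag it: as stated, the ``only if'' direction is false over an arbitrary space of distributions, since $\sup_g\inf_f H = \inf_f\sup_g H$ can hold with neither outer optimum attained, in which case no saddle point exists. Two remarks sharpen this. First, your sufficient direction actually proves more than tightness --- it shows that when a saddle point exists the outer optima \emph{are} attained (at $g^*$ and $f^*$ respectively), so the correct ``if and only if'' is between existence of a saddle point and the statement ``the two values coincide and both outer optima are attained.'' The lemma should be read, or restated, with that qualifier; this is consistent with the paper's own use of $\min$ and $\max$ rather than $\inf$ and $\sup$, but the paper nowhere justifies that these extrema are achieved. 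Second, your proposed repair via weak-$*$ compactness of the set of probability measures and continuity of the bilinear form $H$ is exactly the hypothesis under which Glicksberg's theorem (quoted later in the paper) operates, so the fix is consonant with the note's own assumptions; but it is an added hypothesis, not something that follows from the Fubini condition alone.
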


With this lemma, it remains to determine under which condition the function $H(f,g)$ has a saddle point. But the existence of saddle point can be equally difficult to check.

\subsection{On the Existence of Saddle Point}

\subsubsection{Some Mathematical Theorems}
We review the classic theorems for the existence of a saddle point as follows.

\begin{theorem}[Kneser Theorem]
Let $X$ be a nonempty convex subsect in a \textit{Hausdorff topological} vector space $E$ and $Y$ a nonempty compact and convex subset of a \textit{Hausdorff topological} vector space $F$. Let $f$ be a real valued function defined on $X\times Y$. If $(1)$ the function $x\rightarrow f(x,y)$ is concave on $X$, $(2)$ the function $y\rightarrow f(x,y)$ is lower semicontinuous and convex on $Y$, then
$$\min_{y\in Y}\sup_{x\in X}f(x,y) = \sup_{x\in X}\min_{y\in Y}f(x,y).$$
\end{theorem}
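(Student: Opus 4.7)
The plan is to split the equality into the two standard halves. One direction, $\sup_{x\in X}\min_{y\in Y}f(x,y)\leq \min_{y\in Y}\sup_{x\in X}f(x,y)$, is the routine min-max inequality already invoked in the body of the note and needs nothing beyond the definitions. So my attention goes entirely to the reverse inequality. I would first check that the outer minimum on the right-hand side is actually attained: $y\mapsto \sup_{x\in X}f(x,y)$ is lower semicontinuous on $Y$ as a pointwise supremum of lower semicontinuous functions, and such a function attains its infimum on the compact set $Y$. Call this attained value $\alpha$, and reduce the goal to: for every $\beta<\alpha$, there exists $x_0\in X$ with $f(x_0,y)\geq \beta$ for all $y\in Y$, because then $\min_{y}f(x_0,y)\geq \beta$ and letting $\beta\uparrow \alpha$ gives the desired inequality.

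Second, I would argue by contradiction. Assume no such $x_0$ exists, so that for every $x\in X$ the sublevel set $C_x:=\{y\in Y:f(x,y)\leq \beta\}$ is nonempty. By hypothesis (2), each $C_x$ is closed (lower semicontinuity in $y$) and convex (convexity in $y$), and it lies inside the compact $Y$. If the family $\{C_x\}_{x\in X}$ has the finite intersection property, compactness of $Y$ yields a common point $y^*\in \bigcap_{x\in X}C_x$, and then $\sup_{x\in X}f(x,y^*)\leq \beta<\alpha$ contradicts the defining property of $\alpha$. So the problem collapses to showing that $\bigcap_{i=1}^n C_{x_i}\neq\emptyset$ for every finite selection $x_1,\ldots,x_n\in X$.

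The main obstacle is this finite-intersection step, which is essentially a finite-dimensional minimax lemma. I would handle it by induction on $n$. The case $n=1$ is immediate from the assumption. For the inductive step, I would exploit the concavity of $f(\cdot,y)$ in the first argument together with the Hahn-Banach separation theorem, which is available because $F$ is a Hausdorff topological vector space and the closed convex sets involved live inside the compact $Y$. Concretely, if two of the level sets $C_{x_1},C_{x_2}$ were disjoint, a continuous linear functional strictly separating them, combined with convex combinations $x_\lambda=\lambda x_1+(1-\lambda)x_2\in X$ and the induction hypothesis applied to subfamilies of size $n-1$, forces the existence of some $\bar x\in X$ whose sublevel set $C_{\bar x}$ is empty, contradicting the choice $\beta<\alpha$.

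The genuinely subtle step is this geometric lemma matching concavity along $X$ with convex separation in $F$; everything else, namely the reduction to sublevel sets, the lower-semicontinuity bookkeeping, and passage through compactness, is essentially formal. Once the finite-intersection step is secured, the contradiction in the second paragraph closes the argument and yields the stated minimax equality, as well as the attainment of the outer minimum that justifies writing $\min$ rather than $\inf$.
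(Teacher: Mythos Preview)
The paper does not prove this theorem. Kneser's theorem is listed, together with Von Neumann's theorem, a ``General Theorem,'' and Glicksberg's theorem, in the subsection headed ``Some Mathematical Theorems,'' which the paper explicitly introduces with ``We review the classic theorems for the existence of a saddle point as follows.'' No argument is given for any of them; they are quoted as background. There is therefore nothing in the paper against which to compare your proposal.

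On the proposal itself: the reduction you describe (attainment of the outer minimum via lower semicontinuity on compact $Y$, passage to sublevel sets $C_x$, and the compactness/finite-intersection manoeuvre) is the standard and correct scaffolding for Kneser's theorem. The only substantive step is, as you say, the finite lemma. But your sketch of that step has a genuine gap. You invoke the Hahn--Banach separation theorem in $F$, yet the hypothesis is only that $F$ is a Hausdorff topological vector space, not that it is locally convex; without local convexity the geometric Hahn--Banach theorem is unavailable, and disjoint closed convex subsets of $Y$ need not be separable by a continuous linear functional. Kneser's actual argument avoids separation entirely: the induction on $n$ proceeds by moving along the segment $x_\lambda=(1-\lambda)x_1+\lambda x_2$ in $X$, using concavity in the first variable to compare $C_{x_\lambda}$ with $C_{x_1}$ and $C_{x_2}$, and a one-parameter connectedness (intermediate-value) argument in $\lambda\in[0,1]$ to locate a point where the required intersection is nonempty. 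As written, your inductive step would not go through in the stated generality; replacing the separation appeal by the elementary segment argument is what is needed.
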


\begin{theorem}[Von Neumann Theorem]
Let $X$ and $Y$ be nonempty compact and convex subsets in a \textit{Hausdorff} locally convex vector spaces $E$ and $F$ respectively and $f$ a real valued function defined on $X\times Y$. Suppose $(1)$ the function $x\rightarrow f(x,y)$ is lower semicontinuous and quasiconvex on $X$, $(2)$ the function $y\rightarrow f(x,y)$ is upper semicontinuous and quasiconcave on $Y$. Then, $f$ has a saddle point.
\end{theorem}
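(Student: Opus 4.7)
The plan is to establish the saddle point in two stages: first prove the minimax equality
$$\inf_{x\in X}\sup_{y\in Y} f(x,y) \;=\; \sup_{y\in Y}\inf_{x\in X} f(x,y),$$
and then upgrade the outer $\inf$ and $\sup$ to attained extrema whose witnesses together form a saddle point.

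For the attainment pieces, for each fixed $y$ the map $x\mapsto f(x,y)$ is lower semicontinuous on the compact set $X$, so $\inf_{x\in X} f(x,y)$ is actually a minimum by the classical Weierstrass theorem for LSC functions on compact sets; symmetrically $\sup_{y\in Y} f(x,y)$ is a maximum for each fixed $x$. Define $\varphi(y) = \min_{x\in X} f(x,y)$ and $\psi(x) = \max_{y\in Y} f(x,y)$. A standard envelope argument shows $\psi$ is LSC on $X$ (a supremum of a family of functions each LSC in $x$) and $\varphi$ is USC on $Y$ (an infimum of a family of functions each USC in $y$), so by compactness of $X$ and $Y$ the outer optima $\min_{x\in X}\psi(x)$ and $\max_{y\in Y}\varphi(y)$ are attained at some $x^*\in X$ and $y^*\in Y$ respectively.

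The hard part, and the main obstacle, is the minimax equality itself. The easy direction $\sup_y\varphi(y)\le \inf_x\psi(x)$ is the generic min-max inequality already invoked earlier in the paper. For the reverse I would invoke Sion's minimax theorem, whose proof proceeds by a Knaster--Kuratowski--Mazurkiewicz (KKM) argument on sublevel sets. Concretely, fix any $c$ strictly less than $\inf_{x\in X}\psi(x)$ and set $A_y = \{\, x\in X : f(x,y)\le c\,\}$ for each $y\in Y$; quasiconvexity of $f(\cdot,y)$ makes $A_y$ convex, while lower semicontinuity makes it closed, hence compact in $X$. One then shows the family $\{A_y\}_{y\in Y}$ has the finite intersection property by a two-variable KKM argument that uses quasiconcavity and upper semicontinuity of $f(x,\cdot)$ to produce, for every finite subfamily $\{y_1,\dots,y_n\}$, a common point in $\bigcap_i A_{y_i}$. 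Compactness of $X$ then yields a common point $\bar x\in \bigcap_{y\in Y} A_y$, which contradicts $\psi(\bar x)>c$. Letting $c\uparrow \inf_x\psi(x)$ delivers the equality.

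With the equality in hand, $\psi(x^*)=\varphi(y^*)$, and for all $x\in X$ and $y\in Y$ one reads off
$$f(x^*,y)\;\le\;\psi(x^*)\;=\;\varphi(y^*)\;\le\;f(x,y^*).$$
Specializing $y=y^*$ on the left and $x=x^*$ on the right pins both sides at $f(x^*,y^*)$, yielding the saddle-point inequality $f(x^*,y)\le f(x^*,y^*)\le f(x,y^*)$. The only delicate point of the whole argument is the KKM/Sion step: it is where both semicontinuity hypotheses and both quasi-convex/quasi-concave hypotheses are used simultaneously, and where the Hausdorff locally convex structure is needed to support the finite-intersection compactness argument; the attainment and the saddle-point extraction around it are essentially routine in comparison.
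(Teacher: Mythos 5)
The paper states this Von Neumann/Sion-type theorem as a classical result and supplies no proof of its own, so there is no in-paper argument to compare against; what you have written is the standard modern proof architecture, and it is essentially sound. The attainment bookkeeping is correct: $\psi(x)=\max_{y\in Y}f(x,y)$ is a pointwise supremum of functions that are LSC in $x$, hence LSC, and $\varphi(y)=\min_{x\in X}f(x,y)$ is an infimum of USC functions, hence USC, so both outer optima are attained on the compact sets $X$ and $Y$; and once the minimax equality $\min_x\psi(x)=\max_y\varphi(y)$ is in hand, the chain $f(x^*,y)\le\psi(x^*)=\varphi(y^*)\le f(x,y^*)$ does extract a genuine saddle point. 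The honest caveat is that the entire mathematical content of the theorem lives in the one step you invoke rather than execute: the KKM/finite-intersection argument showing $\inf_{x}\sup_{y}f\le\sup_{y}\inf_{x}f$. As written, your proposal is a reduction to Sion's theorem, not a self-contained proof; that is a reasonable thing to do for a textbook result, but you should say so explicitly rather than present the KKM step as routine.

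Two smaller points. First, the claim that the family $\{A_y\}$ has the finite intersection property ``for any $c<\inf_x\psi(x)$'' is not literally right: if $c<\sup_y\varphi(y)$ then some $A_{y_0}$ is already empty. The argument must be run inside the contradiction hypothesis, choosing $c$ strictly between $\sup_y\varphi(y)$ and $\inf_x\psi(x)$; only then are all the $A_y$ nonempty, closed, and convex, and only then does the two-variable quasiconcavity/USC argument produce points in finite intersections. Second, the Hausdorff locally convex structure is not in fact what powers the finite-intersection step --- closed subsets of the compact set $X$ with the finite intersection property have nonempty intersection in any topological space, and Sion's theorem itself needs no local convexity. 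That hypothesis is an artifact of the classical von Neumann--Fan formulation the paper reproduces, and your proof would go through without it.
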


A theorem for the more general cases,
\begin{theorem}[General Theorem]
Let $M$ and $N$ be any spaces, $f$ a function on $M\times N$ that is concave-convex like. If for any $c< \inf \sup f$ there exists a \textbf{finite} subset $X\subset M$ such that for any $\nu \in N$ there is an $x\in X$ with $f(x,\nu)>c$, then $\sup\inf f = \inf\sup f$
\end{theorem}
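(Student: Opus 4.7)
The plan is to establish the two inequalities $\sup\inf f \leq \inf\sup f$ and $\inf\sup f \leq \sup\inf f$ separately. The first is just the general min-max inequality already invoked earlier in the note; it holds for arbitrary $M$, $N$, and $f$, so none of the hypotheses of the theorem are needed for this direction. All of the content therefore lies in the reverse inequality, and I would prove it by contradiction. Suppose $\sup\inf f < \inf\sup f$ and pick any $c$ strictly between the two. By the finite-covering hypothesis applied to this $c$, there exists a finite set $X = \{x_1,\ldots,x_n\} \subset M$ such that for every $\nu \in N$,
$$\max_{1 \leq i \leq n} f(x_i, \nu) > c.$$
If I can collapse this finite cover into a single element $x^{\star} \in M$ with $f(x^{\star}, \nu) \geq c$ for all $\nu$, then $\sup_{x} \inf_{\nu} f(x,\nu) \geq c$, contradicting the choice of $c$ and closing the argument.

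The collapse would be carried out by induction on $n$. The base case $n=1$ is immediate. For the inductive step, the concave-convex-like hypothesis enters as follows: the concave-like property in the first argument lets me merge any two candidates $x_i, x_j$ into a single element of $M$ that pointwise dominates a prescribed convex combination of $f(x_i,\cdot)$ and $f(x_j,\cdot)$, while the convex-like property in the second argument provides the analogous construction on $N$. Using these in tandem, I would reduce a covering of size $n$ at level $c$ to a covering of size $n-1$ at a slightly smaller level $c' < c$, still strictly exceeding $\sup\inf f$. Iterating $n-1$ times delivers a single $x^{\star}$ that uniformly beats some threshold $c'' > \sup\inf f$, which is the desired contradiction.

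The main obstacle, and the step in which the concave-convex-like structure really does its work, is the single-step reduction from $n$ to $n-1$. A naive merger of $x_{n-1}$ and $x_n$ by concave-likeness fails in general, because the merged element only dominates a convex combination and so may fall below $c$ at precisely those $\nu$'s for which one of $f(x_{n-1},\nu), f(x_n,\nu)$ is large while the other is small. Handling this correctly requires a two-sided argument in the spirit of the classical Fan--Sion proof: one invokes the convex-like property in $N$ to collect the "bad" $\nu$'s into a single adversarial $\nu^{\star}$ on which a tentative merge fails only by a controlled amount, and then retunes the convex weights in the merger (again by concave-likeness in $M$) so that the retuned element beats $\nu^{\star}$. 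The delicate, quantitatively subtle point is that each reduction erodes $c$ slightly, and one must track this erosion carefully so that after all $n-1$ steps the surviving threshold still strictly exceeds $\sup\inf f$; this is the genuine obstacle and where all the remaining work of the proof would be concentrated.
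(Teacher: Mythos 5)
First, a point of comparison: the paper itself offers no proof of this statement; it is quoted as one of several classical minimax theorems in the review of conditions for a saddle point, so there is no in-paper argument to match yours against. Judged on its own, your proposal has the correct architecture --- the inequality $\sup\inf f\le\inf\sup f$ is free, and the whole content is to show that for every $c<\inf\sup f$ the finite cover $X=\{x_1,\dots,x_n\}$ forces $\sup_x\inf_\nu f(x,\nu)\ge c$. But the step that carries all the weight, collapsing the cover to a single $x^\star$, is exactly the step you leave open, and the route you sketch for it does not work as described. Concave-likeness only guarantees an $x_3$ dominating a fixed convex combination $t f(x_i,\cdot)+(1-t)f(x_j,\cdot)$, and, as you yourself observe, such a combination can drop below $c$ at points where one of the two functions is small; your proposed repair --- ``collect the bad $\nu$'s into a single adversarial $\nu^\star$'' --- is not available, because convex-likeness combines only finitely many points of $N$ while the bad set is in general infinite, and no rule is given for how the weight $t$ is to be retuned. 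The eroding-threshold bookkeeping you flag as the remaining difficulty is a symptom of attacking the collapse pairwise; it is not how the classical proof goes.

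The missing idea is a separation argument that handles all of $X$ at once. Map $N$ into $\mathbb{R}^n$ by $\nu\mapsto(f(x_1,\nu),\dots,f(x_n,\nu))$ and let $K$ be the convex hull of the image. Convex-likeness in the second variable shows that every point of $K$ coordinatewise dominates some image point; since the covering hypothesis says no image point lies in the downward quadrant $Q=\{y:\ y_i\le c\ \text{for all } i\}$, and $Q$ is downward closed, it follows that $K\cap Q=\emptyset$. Separating these two convex sets by a hyperplane yields weights $\lambda_1,\dots,\lambda_n\ge0$ with $\sum_i\lambda_i=1$ and $\sum_i\lambda_i f(x_i,\nu)\ge c$ for every $\nu\in N$. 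Now concave-likeness in the first variable, applied $n-1$ times, produces a single $x^\star\in M$ with $f(x^\star,\nu)\ge\sum_i\lambda_i f(x_i,\nu)\ge c$ for all $\nu$, hence $\sup\inf f\ge c$; letting $c$ increase to $\inf\sup f$ finishes the proof. Note that the threshold $c$ is never eroded in this argument, so the quantitative tracking you identified as the genuine obstacle simply does not arise.
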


\subsubsection{Results with Compact Feasible Regions $\mathcal{S}$ and $\mathcal{P}$}

Let us firstly make another assumption that the feasible regions for the deterministic online algorithm and input are compact (bounded and closed). For example, $\mathcal{S}$ and $\mathcal{P}$ are compact subspaces of the Euclidian space (recall that $s$ and $p$ can be vectors or matrix). We provide the following well-established theorem to show the existence of the saddle point.

\begin{theorem}[Glicksberg's theorem]
If $A$ and $B$ are compact sets, and $K$ is an upper semicontinuous or lower semi-continuous function on $A\times B$, then
$$\sup_f\inf_g\int\int K df dg = \inf_g\sup_f\int\int K df dg,$$
where $f$ and $g$ run over Borel probability measures on $A$ and $B$.
\end{theorem}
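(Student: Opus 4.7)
The plan is to reduce to the continuous case via monotone approximation of $K$ and then invoke one of the minimax theorems (Kneser or Sion) stated just above. The functional at stake is the bilinear form $H(f,g) = \int\int K\, df\, dg$ on pairs of Borel probability measures $(f,g)$. By the Riesz--Markov--Kakutani and Banach--Alaoglu theorems, the sets $\mathcal{M}_1(A)$ and $\mathcal{M}_1(B)$ of Borel probability measures on $A$ and $B$ are weak-$*$ compact convex subsets of Hausdorff locally convex topological vector spaces (the duals of $C(A)$ and $C(B)$), so they are eligible to play the role of $X$ and $Y$ in those theorems. Bilinearity of $H$, which is guaranteed by Fubini since a semicontinuous $K$ is bounded and Borel on the compact product, gives affineness, and in particular simultaneous concavity and convexity, in each argument separately.

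First, suppose $K$ is continuous. Then for each fixed $g$ the map $f \mapsto H(f,g)$ is weak-$*$ continuous by the definition of weak-$*$ convergence applied to the section integral, and likewise in $g$. The hypotheses of Kneser's theorem (concavity in the sup variable; convexity, lower semicontinuity, and compactness in the inf variable) are directly satisfied, yielding
$$\sup_f \inf_g H(f,g) \;=\; \inf_g \sup_f H(f,g),$$
with the inner infimum attained on the compact $\mathcal{M}_1(B)$.

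Second, extend to semicontinuous $K$. Assume without loss of generality that $K$ is lower semicontinuous (the upper semicontinuous case is symmetric, with the roles of $A$ and $B$ interchanged). By a Baire--Urysohn-type approximation, write $K = \sup_\alpha K_\alpha$ with $K_\alpha$ continuous and the net increasing; monotone convergence then gives $H = \sup_\alpha H_\alpha$, where $H_\alpha$ is defined analogously. The continuous case delivers minimax values $V_\alpha$ that are monotone increasing in $\alpha$, together with minimizers $g_\alpha^* \in \arg\min_g \sup_f H_\alpha(f,g)$. Passing to a weak-$*$ convergent subnet $g_\alpha^* \to g^*$ (possible because $\mathcal{M}_1(B)$ is compact), a standard limiting argument using continuity of each $H_\alpha$ in $g$ for fixed $f$ and the monotonicity $H_\alpha \le H_\beta$ for $\alpha \le \beta$ shows that $\sup_f H(f,g^*) \le V := \sup_\alpha V_\alpha$. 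Combined with the trivial inequalities $V \le \sup_f \inf_g H \le \inf_g \sup_f H \le \sup_f H(f,g^*)$, this forces all four quantities to coincide, yielding the desired equality.

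The main obstacle I anticipate is the limiting step in the second paragraph. In a general compact Hausdorff (rather than metric) product $A\times B$, approximation of a bounded LSC function by continuous ones requires nets rather than sequences, and the subsequent subnet extraction and bookkeeping rely on Moore--Smith convergence in the weak-$*$ topology on $\mathcal{M}_1(B)$; showing that $H_\alpha(f,g_\beta^*) \to H_\alpha(f,g^*)$ uniformly enough in $\beta$ to commute with the outer $\sup_\alpha$ is the delicate point. A secondary, purely bookkeeping item is the symmetric treatment of the USC case: one approximates from above by continuous functions and extracts a limit maximizer $f^* \in \mathcal{M}_1(A)$ instead, arriving at exactly the same minimax equality.
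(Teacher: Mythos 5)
The paper does not prove this statement at all: Glicksberg's theorem is quoted as a ``well-established theorem'' with no argument, so there is nothing in the source to compare against, and your proposal is supplying a proof the paper omits. What you sketch is the standard route and it is essentially sound: weak-$*$ compactness and convexity of $\mathcal{M}_1(A)$, $\mathcal{M}_1(B)$ via Riesz representation and Banach--Alaoglu, bilinearity of $H$, the continuous case by Kneser/Sion, and then monotone approximation of a semicontinuous $K$ by continuous functions with a limit extracted from the net of minimizers $g_\alpha^*$. Two corrections. First, your claim that a semicontinuous function on a compact set is bounded is false (it is only bounded on one side; e.g.\ $K(x)=1/x$ on $(0,1]$ extended by $K(0)=0$ is LSC and unbounded), so either assume $K$ bounded or work with extended-real-valued integrals via Tonelli for the half that is bounded below. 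Second, the ``delicate point'' you anticipate is not actually there: no uniformity in $\beta$ is needed. Fix $\alpha$ and $f$; for every $\beta\geq\alpha$ in the subnet, monotonicity gives $H_\alpha(f,g_\beta^*)\leq H_\beta(f,g_\beta^*)\leq V_\beta\leq V$; weak-$*$ continuity of $H_\alpha(f,\cdot)$ lets you pass to the limit in $\beta$ to get $H_\alpha(f,g^*)\leq V$; then $\sup_\alpha$ (monotone convergence) and $\sup_f$ are taken afterwards, each preserving the bound by $V$, so $\inf_g\sup_f H\leq\sup_f H(f,g^*)\leq V\leq\sup_f\inf_g H$ closes the chain with no interchange of limits required. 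With those repairs the argument is complete, and it is worth noting that this is exactly the kind of justification the paper would need before leaning on this theorem in its ``Remark'' and ``Generalization'' sections.
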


In \textit{Glicksberg's Theorem}, even though we say $A$ and $B$ are subspaces of Euclidian space, the variables, $f$ and $g$, of the function $K$ do not necessarily lies in the Euclidian Space, just thinking about the probability density distribution of a continuous random variable.

Moreover, in my mind, this theorem can be viewed as a generalization of the \textit{Nash Equilibrium theorem} and a special case of the \textit{Debreu- Glicksberg-Fan Theorem}.

\subsection{\textbf{Remark}}

As we can see, it is not easy for the strong min max inequality to hold. So we are not so confident that the optimality of the randomized online algorithm can always be proved by \textit{Yao's Principle} (suppose the \textit{convexity}, \textit{continuity}, \textit{compactness} conditions are not satisfied).

\section{Generalization}
In this part, we try to generalize the above result to the more complex scenarios, in which the algorithm is so complicated that it can not simply represented by single or several variables.

To make our life easier (easy to use the well established results, especially \textit{Glicksberg's Theorem}), we make two assumptions as follows.

\begin{enumerate}[1]
\item The input belongs to a Banach Space. For a input vector $u$ indicating a demand sequence, its norm is defined as the optimal offline cost to satisfy the demand, i.e., $\text{norm}(u) = \text{Cost}_{\text{off}}(u)$.
\item The online algorithm also belongs to a Banach Space. We represent one online algorithm as a function $f$ from the space of input to $\mathbb{R}^+$, and the value of the function is defined as the online cost given the input $u$, i.e., $f(u) = \text{Cost}_{\text{off}}(u).$ The norm of the function is defined as $\text{norm}(f) = \sup_u\frac{f(u)}{\text{norm}(u)}$
\end{enumerate}

\section{Not the End}
If the above definition is valid (the definition of space and norm need to verify.), the optimal online algorithm can be derived under the framework of this note and its optimality can also be prove if the condition of \textit{Glicksberg's Theorem} is satisfied.

Then we make a conjecture as follows,
\begin{quote}
\textbf{Conjecture: } There exist some problems, the optimality of whose online algorithm cannot be proved by \textit{Yao's Principle}.
\end{quote} 
\bibliographystyle{abbrv}
\bibliography{ref}

\begin{thebibliography}{1}

\bibitem{boyd2004convex}
S.~Boyd and L.~Vandenberghe.
\newblock {\em Convex optimization}.
\newblock Cambridge university press, 2004.

\bibitem{karlin1994competitive}
A.~R. Karlin, M.~S. Manasse, L.~A. McGeoch, and S.~Owicki.
\newblock Competitive randomized algorithms for nonuniform problems.
\newblock {\em Algorithmica}, 11(6):542--571, 1994.

\bibitem{karlin1988competitive}
A.~R. Karlin, M.~S. Manasse, L.~Rudolph, and D.~D. Sleator.
\newblock Competitive snoopy caching.
\newblock {\em Algorithmica}, 3(1-4):79--119, 1988.

\bibitem{minghua_sigmetrics}
L.~Lu, J.~Tu, C.-K. Chau, M.~Chen, and X.~Lin.
\newblock Online energy generation scheduling for microgrids with intermittent
  energy sources and co-generation.
\newblock In {\em Proc. SIGMETRICS}, pages 53--66. ACM, 2013.

\bibitem{lu2013simple}
T.~Lu, M.~Chen, and L.~L. Andrew.
\newblock Simple and effective dynamic provisioning for power-proportional data
  centers.
\newblock {\em IEEE Transactions on Parallel and Distributed Systems},
  24(6):1161--1171, 2013.

\bibitem{YaoPrinciple}
A.~C.-C. Yao.
\newblock Probabilistic computations: Toward a unified measure of complexity.
\newblock In {\em 2013 IEEE 54th Annual Symposium on Foundations of Computer
  Science}, pages 222--227. IEEE, 1977.

\end{thebibliography}

\end{document}